\begin{document}

\newcommand{\spec}{\rm{spec}}
\newcommand{\linearspan}{\rm{span}}
\newcommand{\id}{{\rm{id}}}
\newcommand{\R}{\mathbbm{R}}
\newcommand{\cB}{{\cal B}}
\newcommand{\cS}{{\cal S}}
\newcommand{\CP}{{\mathbb{C}}{\bf P}}
\newcommand{\cH}{{\cal H}}
\newcommand{\cN}{{\cal N}}
\newcommand{\rr}{\mathbbm{R}}
\newcommand{\cT}{{\cal T}}
\newcommand{\cC}{{\cal C}}
\newcommand{\cK}{{\cal K}}
\newcommand{\supp}{{\rm supp}}
\newcommand{\cV}{{\cal V}}
\newcommand{\cL}{{\cal L}}
\newcommand{\E}{{\cal E}}
\newcommand{\cX}{{\cal X}}
\newcommand{\V}{{\cal V}}
\newcommand{\cc}{{\cal{C}}}
\newcommand{\ii}{\mathbbm{1}}
\newcommand{\cM}{{\mathcal M}}
\newcommand{\ra}{\rightarrow}
\newcommand{\C}{\mathbb{C}}
\newcommand{\1}{\mathbbm{1}}
\newcommand{\F}{\mathbbm{F}}
\newcommand{\h}{\frak{h}}
\newcommand{\osc}{{\rm osc}}
\newcommand{\tr}[1]{{\rm tr}\left[#1\right]}
\newcommand{\gr}[1]{\boldsymbol{#1}}
\def\>{{\rangle}}
\def\<{{\langle}}
\newcommand{\be}{\begin{equation}}
\newcommand{\ee}{\end{equation}}
\newcommand{\bea}{\begin{eqnarray}}
\newcommand{\eea}{\end{eqnarray}}
\newcommand{\ba}{\begin{array}}
\newcommand{\ea}{\end{array}}
\newcommand{\ket}[1]{|#1\rangle}
\newcommand{\bra}[1]{\langle#1|}
\newcommand{\avr}[1]{\langle#1\rangle}
\newcommand{\red}[1]{{\bf \textcolor{red}{[{#1}]}}}
\newcommand{\blue}[1]{{\bf \textcolor{blue}{[{#1}]}}}
\newcommand{\green}[1]{{\bf \textcolor{green}{[{#1}]}}}
\newcommand{\D}{{\cal D}}
\newcommand{\eq}[1]{Eq.~(\ref{#1})}
\newcommand{\ineq}[1]{Ineq.~(\ref{#1})}
\newcommand{\sirsection}[1]{\section{\large \sf \textbf{#1}}}
\newcommand{\sirsubsection}[1]{\subsection{\normalsize \sf \textbf{#1}}}
\newcommand{\ack}{\subsection*{\normalsize \sf \textbf{Acknowledgments}}}
\newcommand{\front}[5]{\title{\sf \textbf{\Large #1}}
\author{#2 \vspace*{.4cm}\\
\footnotesize #3}
\date{\footnotesize \sf \begin{quote}
\hspace*{.2cm}#4 \end{quote} #5} \maketitle}
\newcommand{\eg}{\emph{e.g.}~}

\newcommand{\rational}{\mathbb Q} 
\newcommand{\real}{\mathbb R} 
\newcommand{\complex}{\mathbb C} 
\renewcommand{\nat}{\mathbb N} 
\newcommand{\integer}{\mathbb Z} 

\newcommand{\proofend}{\hfill\fbox\\\medskip }

\newcommand{\As}{\mathfrak{S}(\Ao)} 
\newcommand{\ABs}{\mathfrak{S}(\Ao,\Bo,\ldots)} 
\newcommand{\Ascomp}{\As^\perp} 
\newcommand{\Bs}{\mathcal{B}} 
\newcommand{\Bscomp}{\mathcal{B}^\perp} 
\newcommand{\os}{\mathfrak{S}} 

\newcommand{\rank}[1]{\mathrm{rank}(#1)} 

\definecolor{gray}{rgb}{.6,.6,.6}
\DeclareRobustCommand{\cut}[1]{{\color{gray}\sout{#1}}}


\theoremstyle{plain}
\newtheorem{theorem}{Theorem}
\newtheorem{proposition}[theorem]{Proposition}
\newtheorem{lemma}[theorem]{Lemma}
\newtheorem{definition}[theorem]{Definition}
\newtheorem{corollary}[theorem]{Corollary}
\newtheorem{conjecture}[theorem]{Conjecture}

\theoremstyle{definition}
\newtheorem{example}[theorem]{Example}
\newtheorem{remark}[theorem]{Remark}

\title{{Standard super-activation for Gaussian channels requires squeezing}}
\author{Daniel Lercher$^{1,}$\footnote{Electronic address: daniel.lercher@ma.tum.de}}
\author{G\'eza Giedke$^{1,2,}$\footnote{Electronic address: giedke@mpq.mpg.de}}
\author{Michael M.~Wolf$^{\, 1,}$\footnote{Electronic address: m.wolf@tum.de}}
\affiliation{$^1$Department of Mathematics, Technische Universit\"at M\"unchen, 85748 Garching, Germany\\$^2$Max-Planck-Institut f\"ur Quantenoptik, 85748 Garching, Germany}
\date{\today}

\begin{abstract}The quantum capacity of bosonic Gaussian quantum channels
can be  non-additive in a particularly striking way:
a pair of such optical-fiber type channels can individually have zero
quantum capacity but super-activate each other such that the combined
channel has strictly positive capacity. This has been shown in [Nature
Photonics 5, 624 (2011)] where it was conjectured that squeezing
is a necessary resource for this phenomenon. We provide a proof of this conjecture by showing that for gauge covariant channels a Choi matrix with positive partial transpose implies that the
channel is entanglement-breaking. In addition, we construct an example which shows that this implication fails to hold for Gaussian channels which arise from passive interactions with a squeezed
environment. \end{abstract}

\maketitle

\section{Introduction}\label{SectIntro}

A question at the heart of information theory---classical as well as
quantum---is, how to transmit information reliably, given imperfect
resources. The means to transmit information are referred to as the (typically
noisy) \emph{channel}. Its most important quantitative characteristic is how many
units of information can be reliably sent per use in the limit of many uses of
the channel. This number is called the \emph{capacity} of the channel
\cite{Cover1991}.

For classical memoryless channels Shannon posed and answered this questions in
his groundbreaking work \cite{Shannon48} in which he provided a tractable
formula for the capacity of any such channel.

One of the fundamental insights of quantum information theory is that for
quantum channels several distinct capacities can be defined depending on which
kind of information (classical, quantum, \dots) is to be sent and which
ancillary resources are provided \cite{Wilde2013}. Moreover, despite
considerable progress, no closed general expression is known for the classical
or quantum capacity. What complicates matters in the quantum world is that
quantum correlations between different channel uses can improve the
performance or, mathematically speaking, lead to non-additivity effects. 
This
has two practical consequences: not only is it in general necessary to
entangle the channel inputs over many uses to fully exploit the capacity of
the channel, this capacity can in some cases be further enhanced by combining
two different channels, so that their joint capacity exceeds the sum of the
two parts.
One of the most striking examples of these effects is the
\emph{super-activation} of the quantum capacity: a pair of channels can
individually have zero quantum capacity, but when combined give rise to a
channel whose quantum capacity is strictly positive \cite{SmYa08}.

In \cite{SSY11} it was shown that this effect can 
even occur within the
practically most important class of (bosonic) Gaussian channels. Among others,
they describe the transmission of the continuous degrees of freedom of light
in free space and in optical fibers 
\cite{Weedbrook2012} (taking into account
the most common loss and noise mechanisms) as well as the time evolution of
quantum memories which are based on collective excitations 
in atomic systems
\cite{HSP10}.  

One of the channels used in the construction in \cite{SSY11} can indeed be
regarded as a simple model of a lossy single-mode optical fiber. The second
channel, however, involves 
\emph{squeezing}, an experimentally more demanding
interaction arising from processes in which, e.g., photons are created or
annihilated pairwise \cite{Dod02}. Squeezing can be produced by birefringent materials and
selectively reduces the quantum noise associated with certain observables of
the electromagnetic field below its standard quantum value. In \cite{SSY11}
super-activation was shown employing
a high degree of squeezing within the
interaction between a two-mode system and its environment---something
considerably more difficult to realize than simple loss processes. The authors
of \cite{SSY11} write: ``\emph{Although an example using only linear optical
  elements would be desirable, we suspect, but cannot prove, that none
  exist.}''  The present paper aims at settling this conjecture in the
affirmative.

Currently, there is basically one approach towards super-activation. This is
based on the fact that there are only two classes of channels known, which
have provably zero quantum capacity: channels with a symmetric extension and
so-called PPT channels. Since both classes are closed with respect to parallel
composition, the only combination with a chance of successful super-activation
is to take one channel from each class. In this work we show that if we
restrict ourselves to passive Gaussian channels (i.e., those not involving
squeezing), then the set of PPT channels becomes a strict subset of the set of
channels with a symmetric extension, therefore rendering super-activation
impossible.

In the following we will first briefly review the main mathematical tools to
describe Gaussian channels and precisely define ``passive channels'', i.e.,
the class of Gaussian channels without squeezing within which we then show
that super-activation is not possible.

\section{Prerequisites}
We begin with recalling  basic notions and results needed for our purpose.
\subparagraph{Gaussian states and channels:}\label{SSectionGStCh}
We consider a continuous variable system of $n$ bosonic modes whose description involves $n$ pairs of generalized position and momentum operators $Q_k, P_k$ which may correspond to the quadratures of
electromagnetic field modes. With the definition $R:=\left(Q_1,P_1,\dots,Q_{n},P_{n}\right)$ the canonical commutation relations read 
	\be\label{eqSigma}
		[R_k,R_l]=i\left(\sigma_n\right)_{kl}\1,
		\quad
	{\rm with}\;
	\sigma_n:=\bigoplus_{i=1}^n 
	\left(
	 \begin{array}{cc}
	 0 & 1\\
	-1& 0
	\end{array}
	\right)
	\ee
being the symplectic form.\\
We associate with every density operator $\rho$ its displacement vector $d$, with $d_k:=\tr{\rho R_k}$ and its covariance matrix $\Gamma$ with $\Gamma_{kl}:=\tr{\rho\{R_k-d_k\1,R_l-d_l\1\}_+}$,
$k,l=1,\ldots,2n$. $d$ and $\Gamma$ contain the first and second moments of the corresponding phase space distribution.\\
The significance of $d$ and $\Gamma$ becomes evident as we restrict our attention from now on to Gaussian states.
These are defined as
quantum states with a Gaussian Wigner phase space distribution function, see \cite{HolevoBook11}. 
The Hilbert space of a continuous variable system is of infinite dimensions. The restriction to the set of Gaussian states allows for a much simpler description that requires only
a finite number of parameters.
In particular a Gaussian state is completely specified by $d$ and $\Gamma$, the latter being
any real symmetric matrix that satisfies the uncertainty relation
	\be\label{eqHUR}
		\Gamma\geq i\sigma_n.
	\ee
In the following all states are assumed to be centered (i.e. $d=0$) since displacements in phase space are local unitaries in Hilbert space which are irrelevant for our purpose.
Gaussian states form only a small subset of the state space. Yet they provide a good description for many experimentally accessible states, including 
coherent laser beams and Gibbs states of electromagnetic modes.

Now we turn our attention to state transformations, which are
  mathematically described by completely positive maps. If such a map is also trace-preserving
and preserves the Gaussian nature of states, it is called \emph{Gaussian channel}, see \cite{EiWo07}. Again neglecting its effect on $d$ it can be characterized by
its action on covariance matrices, which is given by
	\be\label{eqGChMatrix}
		\Gamma\mapsto X\Gamma X^T +Y,\qquad X,Y=Y^T\in\cM_{2n}(\R).
	\ee
For a pair of real matrices $X$ and $Y=Y^T$ to describe a bona fide Gaussian channel it is necessary and sufficient that
	\be\label{eqCP}
		Y+i\left(\sigma_n - X\sigma_n X^T\right)\geq 0.
	\ee
Unitary Gaussian evolutions then correspond to $Y=0$ and $X$ being real symplectic, i.e. $X\in Sp(2n,\R)=\{S\in\cM_{2n}(\R)|\,S\sigma_n S^T=\sigma_n\}$.
As is well known, every channel can be realized by
a unitary dilation $U$ describing the evolution of the system coupled to
  an environment in state $\rho_E$
\be\label{eqGChannel}
 \rho\mapsto {\rm tr}_E\left[U(\rho_E\otimes\rho)U^\dagger\right].
\ee
For a Gaussian $n$-mode channel, $\rho_E$ can be chosen as a Gaussian state of $n_E\leq 2n$
environmental modes and $U$ as a $(n+n_E)$-mode Gaussian unitary.\cite{caruso2008multimode}. 
Gaussian unitaries are generated by Hamiltonians that are quadratic in the
generalized position and momentum operators $R_k$ and represent a family of
transformations that can be realized, e.g., in quantum optical experiments
\cite{BrLo05}.

On the level of phase space, the evolution of the covariance matrices of
  environment and input state, $\Gamma_E$ and $\Gamma$, is governed by the
  symplectic transformation $S$ and looks like \be \left(
  \Gamma_E\oplus\Gamma \right) \mapsto S \left( \Gamma_E\oplus\Gamma \right)
S^T.  \ee In the notation $ S=\left(
\begin{array}{cc}
S_1&S_2\\
S_3&S_4
\end{array}
\right)
$
that reflects the decomposition of the total system into the environment and
the $n$-modes system, one finds $X=S_4$ and $Y=S_3\Gamma_E S_3^T$. Here,
  $S$ describes the most general Gaussian coupling between
  system and environment including, in particular, squeezing. Since our aim in
the following is to analyse capabilities of quantum channels in the absence of
this (in practice very demanding) ingredient, we now proceed to take a closer
look at the simpler set of unitaries generated by \emph{passive}
Hamiltonians. 

\subparagraph{The Symplectic Orthogonal Group:} 
Passive Hamiltonians are those given by quadratic expressions in $Q_k,P_k$ or,
equivalently, in the annihilation and creation operators
$a_k:=(Q_k+iP_k)/\sqrt{2}$ and $a_k^\dag$ that commute with the total particle
number operator $\sum_ka_k^\dag a_k$. They take the form 
\be\label{eqPassGCh}
\quad H= \sum_{k,l=1}^{m} h_{kl} a^\dagger_k a_l + h.c.
\ee
with $h_{kl}\in\mathbb{C}$. 
A unitary Gaussian evolution
generated by a passive Hamiltonian as in \eqref{eqPassGCh} corresponds to a
symplectic orthogonal matrix $S\in K(m):=Sp(2m,\R)\cap
O(2m,\R)$. Mathematically, $K(m)$ is the largest compact subgroup of the real
symplectic group. Physically, it corresponds to the set of operations which
can be implemented using beam splitters and phase shifters only \cite{ReZe94}.
Note that some of the most common channels such as the lossy channel and the
thermal noise channel can be described by passive Gaussian dilations.

In the following, it is useful to exploit that the group $K(m)$ is
isomorphic to the group $U(m)$ of $m\times m$ unitary matrices
\cite{ADMS95b}. This can be verified easily: First one observes that elements
$R\in\cM_{2m}(\R)$ in the commutant of $\sigma_m$ have the form
\begin{equation}
\label{eqCommSigma}
[\sigma_m,R]=0\Leftrightarrow R=(r_{ij})_{i,j=1}^m,\quad\textnormal{with}\
r_{ij}=
\left(
\begin{array}{cc}
a_{ij}&b_{ij}\\
-b_{ij}&a_{ij}
\end{array}
\right),\;a_{ij},b_{ij}\in\R.
\end{equation}
With this result one verifies that the map  
\begin{align}
\label{eqGroupIso}\Lambda:U(m)&\rightarrow K(m)\\
\nonumber (c_{ij})&\mapsto (C_{ij}),\quad C_{ij}=
\left(
\begin{array}{cc}
\Re{(c_{ij})}&\Im{(c_{ij})}\\
-\Im{(c_{ij})}&\Re{(c_{ij})}
\end{array}
\right)
\end{align}
is indeed a group isomorphism.

At this point we add two observations that will help us later  to exploit the particular structure of real symplectic orthogonals \eqref{eqGroupIso}. The set
\be
\mathcal{C}_n:=\left\{
\left(
\begin{array}{cc}
A&B\\
-B&A
\end{array}
\right)\middle| 
A,B\in
\cM_n(\R)
 \right\}
\ee
together with the operation of matrix multiplication forms a semigroup with neutral element. As such, it is  isomorphic to $\cM_n(\mathbb{C})$. An  isomorphism  is given by
\begin{equation}
\label{eqMonoidIso}
\left(
\begin{array}{cc}
A&B\\
-B&A
\end{array}
\right)
\mapsto A+iB\,.
\end{equation}
And finally, for complex square matrices $A, B \in\cM_n(\C)$ one finds the following criterion for positive-semidefiniteness \cite{RealComplPosDef}
\be\label{eqPosDefEquival}
\left(
\begin{array}{cc}
A&B\\
-B&A
\end{array}
\right)\geq 0 \Leftrightarrow A\pm i B\geq 0\,.
\ee

Now we combine the passive Gaussian unitaries with

\subparagraph{Properties of Gaussian Channels:}\label{SSecPropGCh}

\begin{definition}\label{DefPassGener} We call a Gaussian channel ``passive''
  if it can be generated by a $m=(n_E+n)$-mode passive Hamiltonian $H$ \eqref{eqPassGCh} that
  couples the system to an environment in a Gibbs state $\rho_E$ of a passive
  Hamiltonian $H'$:   
\be\label{eqGibbsEnv}
\rho_E=\frac{{\rm e}^{-\beta H'}}{\tr{{\rm e}^{-\beta H'}}},\quad H'= \sum_{k,l=1}^{n_E} h'_{kl} a^\dagger_k a_l + h.c.
\ee
\end{definition}
One can show that, as a consequence of the normal mode decomposition of
Gaussian states, \eqref{eqGibbsEnv} is equivalent to
$\left[\Gamma_E,\sigma_{n_E}\right]=0$, where $\Gamma_E$ is the covariance 
matrix of the Gaussian state $\rho_E$ and $\sigma_{n_E}$ is the corresponding
symplectic form. This implies for passive Gaussian channels 
\be\label{eqYcommSigma}
\left[Y,\sigma_n\right]=\left[S_3\Gamma_E S_3^T,\sigma_{n}\right]=0,
\ee
as one derives from the block structure of $S_3$ \eqref{eqGroupIso} and the
analogous structure of the elements in the commutant of $\sigma_n$
\eqref{eqCommSigma}. Similarly we find 
$\left[ X,\sigma_n\right]=\left[S_4,\sigma_n\right]=0$. This is a useful property;
  it implies that the channel commutes with the passive unitary generated by
  the number operator (represented on phase space by
  $e^{i\phi\sigma_n}$). Gaussian channels with this property are called
  ``gauge covariant'' \cite{Hol10} and are characterized by matrices $(X,Y)$
  commuting with $\sigma_n$.


Accordingly, all passive Gaussian channels are gauge covariant. 
Let us add a remark on our definition of passive channels. 
Clearly, no active (squeezing-type) interaction is needed to generate
interaction or environmental state. Coupling the system to a Gibbs state
for a passive Hamiltonian includes the typical
situation, since   both non-equilibrium states and squeezing Hamiltonians
usually require active preparation, which do not naturally occur in the
uncontrolled environment. \footnote{Another plausible definition of passive
  channels would be to require passive 
coupling to an environment in a state that is
not squeezed (i.e., $\Gamma_E$ has no eigenvalue smaller than 1). This
includes our definition but adds high-temperature Gibbs states of squeezing
Hamiltonians. We cannot rule out super-activation for these types of channels.
}

Before stating and proving our main result, we finally need to
  characterize on the level of covariance matrices the two types of channels
  that have been used in the examples of super-activation. There, two types of
  noisy channels with restricted capability to transmit entanglement have been
  studied. On the one hand, \emph{entanglement-breaking channels} that, when
  acting on part of an entangled state always produce a separable output and
  \emph{PPT channels} (for ``positive partial transpose''), that may transmit
  entanglement but only in the form of states that remain positive under
  partial transposition and thus represent bound entanglement that cannot be
  locally distilled into pure entangled states \cite{BDSW96,HHH98}. Since a
  finite quantum capacity requires the ability to transmit (asymptotically)
  pure entangled states, this capacity vanishes for PPT channels.

A Gaussian channel is entanglement-breaking if and only if $Y$ admits a
decomposition into real matrices $M$ and $N$ such that \cite{Holevo08}
\be\label{eqEB} Y=M+N,\quad M\geq i\sigma_n,\quad N\geq iX\sigma_n X^T.  \ee
This reflects the fact that any entanglement-breaking quantum channel consists
of a measurement, followed by a state preparation depending on the outcome of
the measurement \cite{shor03eb, holwer05infdimsepeb}. As a consequence, every
entanglement-breaking channel has a symmetric extension and therefore
  zero quantum capacity by the no-cloning lemma \cite{WoZu82}.

A quantum channel $T$ is called a PPT channel if $\theta\circ T$ is completely
positive, where $\theta$ denotes time reversal, which in Schr\"odinger
representation corresponds to transposition \cite{PPTchannel}. A
Gaussian channel characterized by $\left( X,Y\right)$ is PPT iff
\be\label{eqPPT} 
Y-i\left(\sigma_n + X\sigma_n X^T\right)\geq 0.  
\ee

Now we are in a position to turn to the question whether superactivation of
PPT and entanglement-breaking channels is possible in the Gaussian setting
without squeezing. The main technical step is to show that for gauge covariant
channels being PPT and being entanglement-breaking are equivalent.

\section{Main result}\label{SectResults}
\begin{proposition}\label{PropPPTimpliesEB}
A gauge covariant Gaussian channel T is entanglement-breaking iff it is PPT.
\end{proposition}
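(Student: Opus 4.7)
The direction EB $\Rightarrow$ PPT requires no gauge covariance at all: adding the two Hermitian PSD inequalities $M - i\sigma_n \geq 0$ and $N - iX\sigma_n X^T \geq 0$ supplied by \eqref{eqEB} immediately yields \eqref{eqPPT}. I would dispose of this direction in one line.

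For the nontrivial direction PPT $\Rightarrow$ EB, my plan is to transport the entire problem to the complex picture via the monoid isomorphism \eqref{eqMonoidIso} between $\mathcal{C}_n$ and $\cM_n(\C)$. Gauge covariance places $X, Y \in \mathcal{C}_n$, identifying them with a complex matrix $\mathcal{X}$ and a Hermitian $\mathcal{Y}$ in $\cM_n(\C)$. Reading off the block form, $\sigma_n$ corresponds to $iI$ and hence $X\sigma_n X^T$ to $i\mathcal{X}\mathcal{X}^\dagger$. Applying the PSD criterion \eqref{eqPosDefEquival} (which stays valid for complex blocks, since a matrix of that block form is always unitarily equivalent to $(A+iB)\oplus(A-iB)$) to the Hermitian matrix $Y - i(\sigma_n + X\sigma_n X^T)$ then shows that PPT is equivalent to the single Hermitian inequality
\begin{equation*}
\mathcal{Y} \;\geq\; I + \mathcal{X}\mathcal{X}^\dagger.
\end{equation*}
Identical calculations translate the EB conditions $M \geq i\sigma_n$ and $N \geq i X\sigma_n X^T$, for $M,N \in \mathcal{C}_n$, into the complex inequalities $\mathcal{M} \geq I$ and $\mathcal{N} \geq \mathcal{X}\mathcal{X}^\dagger$ respectively.

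It then remains to justify restricting to gauge covariant $M, N \in \mathcal{C}_n$ and to exhibit the decomposition. For the former I would average any candidate EB decomposition over the compact gauge group generated by $e^{\phi\sigma_n}$: this projects $M$ and $N$ into $\mathcal{C}_n$, preserves both inequalities (since $X$ and $\sigma_n$ commute with $e^{\phi\sigma_n}$), and preserves the sum $M+N = Y$ by gauge covariance of $Y$. For the latter, the explicit choice $\mathcal{M} := I$ and $\mathcal{N} := \mathcal{Y} - I$ makes $\mathcal{M} \geq I$ trivial, while $\mathcal{N} \geq \mathcal{X}\mathcal{X}^\dagger$ is exactly the PPT inequality just derived. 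The real substance of the argument is the translation via \eqref{eqMonoidIso} and \eqref{eqPosDefEquival}; once the complex picture is set up, the EB decomposition comes essentially for free and no delicate matrix analysis is required.
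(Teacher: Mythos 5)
Your proposal is correct, and although it rests on the same two tools as the paper's proof --- the isomorphism \eqref{eqMonoidIso} and the positivity criterion \eqref{eqPosDefEquival} --- it takes a genuinely different route through them. The paper first conjugates the channel by passive unitaries so that, via the symplectic singular value decomposition, $X=\hat X\oplus\hat X$ with $\hat X$ diagonal and non-negative, and then exhibits the decomposition $M=Y-X^2$, $N=X^2$, i.e.\ it saturates the $N$-inequality; the positivity of $M-i\sigma$ is then massaged into the PPT condition by a chain of equivalences. You instead translate all three conditions wholesale into the complex picture, where PPT becomes the single inequality $\mathcal{Y}\geq \1+\mathcal{X}\mathcal{X}^\dagger$, and saturate the $M$-inequality with $\mathcal{M}=\1$, $\mathcal{N}=\mathcal{Y}-\1$ (equivalently $M=\1_{2n}$, $N=Y-\1_{2n}$ in phase space). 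This buys you two things: the SVD reduction step becomes unnecessary, and the witness decomposition is independent of $X$. Two small points of rigor, neither of which is a gap. First, when \eqref{eqPosDefEquival} is applied to block matrices whose blocks $A,B$ are themselves complex (as for $Y-i(\sigma+X\sigma X^T)$), the conditions $A+iB\geq 0$ and $A-iB\geq 0$ are no longer complex conjugates of each other, so each translation formally yields \emph{two} Hermitian inequalities; you should note that in each case the second branch ($\mathcal{Y}\geq-\1-\mathcal{X}\mathcal{X}^\dagger$, $\mathcal{M}\geq-\1$, $\mathcal{N}\geq-\mathcal{X}\mathcal{X}^\dagger$, respectively) is implied by the first and can be discarded. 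Second, the averaging over the gauge group is superfluous: to prove the channel entanglement-breaking you need only \emph{exhibit} one admissible pair $(M,N)$, and you are free to choose it in $\mathcal{C}_n$ from the outset; the EB $\Rightarrow$ PPT direction, as you yourself observe, uses an arbitrary decomposition and no covariance at all.
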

\begin{proof}
Evidently, entanglement-breaking implies PPT, so we have to prove the reverse implication.
To this end it is convenient to reorder the canonical coordinates as $\left(Q_1,\dots,Q_{n}, P_1,\dots , P_n \right)$. In this notation
\be
	\sigma_n=
	\left(
	\begin{array}{cc}
	&\1_n\\
	-\1_n&
	\end{array}
	\right)\;
	{\rm and}\; 
	Y=
	\left(
	\begin{array}{cc}
	Y_1&Y_2\\
	-Y_2&Y_1
	\end{array}
	\right).
\ee
The latter follows from \eqref{eqYcommSigma} together with \eqref{eqCommSigma}. Below we omit the index of $\sigma_n$.\\
We prove first that we can restrict ourselves to the case $X=\hat X\oplus\hat X$, where by virtue of the symplectic singular value decomposition \cite{MMWnotSoNormalMode} the matrix $\hat X$ is diagonal
and non-negative.
Assume, this is not the case. Then we can replace the unitary dilation $U$ associated with $T$, which describes the interaction between system and environment, by $ U'=(\1_E\otimes U_G) U (\1_E\otimes U_F)$. $U_F$ and $U_G$ are
passive unitary evolutions that act only on the system. They correspond to symplectic transformations $F, G\in K(n)$ in phase space. We denote the resulting channel by $T'$. $T$ is PPT iff $T'$ is PPT. The same
holds for the entanglement-breaking property. We find
\begin{equation}
\label{eqXdiag}
X'= GXF=
\left(
\begin{array}{cc}
G_1&G_2\\
-G_2&G_1
\end{array}
\right)
\left(
\begin{array}{cc}
X_1&X_2\\
-X_2&X_1
\end{array}
\right)
\left(
\begin{array}{cc}
F_1&F_2\\
-F_2&F_1
\end{array}
\right),
\end{equation}
with $F_1+i F_2, G_1+iG_2 \in U(n)$.
Now we can exploit the isomorphism \eqref{eqMonoidIso} and choose $F_{1,2}$ and $G_{1,2}$ such that $(G_1+iG_2)(X_1+iX_2)(F_1+iF_2)=:\hat X$ is the singular value decomposition of $X_1+iX_2$. Hence,
$X'=\hat X\oplus\hat X$ with $\hat X$ diagonal and non-negative. 

We will now exploit criterion \eqref{eqEB} by showing that the decomposition of $Y$  into $M:=Y-X^2$ and $N:=X^2$ obeys the required conditions, which read:
\bea
\label{eqEBdiagX}
N-iX\sigma X^T&=&X(\1-i\sigma) X^T\geq 0,\\
\nonumber
M-i\sigma&=&\left(
\begin{array}{cc}
Y_1-\hat X^2&Y_2-i\1\\
-Y_2+i\1&Y_1-\hat X^2
\end{array}
\right)
\geq 0
\eea
The first inequality in \eqref{eqEBdiagX} follows simply from $(\1-i\sigma)\geq 0$. In order to arrive at the second inequality we use \eqref{eqPosDefEquival} and rewrite the inequality as
\be
\begin{array}{rc}
Y_1-\hat X^2\pm i (Y_2-i\1)\geq 0&\Leftrightarrow\\ 
Y_1\pm i Y_2-(\hat X^2\mp \1)\geq 0 &{\Leftrightarrow}\\
Y_1\pm iY_2-(\hat X^2+\1)\geq 0&{\Leftrightarrow}\\
\left\{
\begin{array}{l}
Y_1+ iY_2+(\hat X^2+\1)\geq0\\
Y_1-iY_2-(\hat X^2+\1)\geq0
\end{array}
\right.
&\Leftrightarrow\\
Y_1\pm i\left(Y_2-i(\1+\hat X^2)\right)\geq0&.
\end{array}
\ee
Here we used two elementary facts: (i) a matrix is positive iff its complex conjugate is positive, and (ii) the sum of two positive matrices is again positive.
In the last line, with \eqref{eqPosDefEquival}, we recover the PPT criterion \eqref{eqPPT}
\begin{equation}
 Y-i(\sigma + X\sigma X^T)=Y-i\sigma(\1+X^2)
=\left(
\begin{array}{cc}
Y_1&Y_2-i(\1+\hat X^2)\\
-Y_2+i(\1+\hat X^2)&Y_1
\end{array}
\right)\geq 0\,,
\end{equation}
which concludes the proof.
\end{proof}
\begin{proposition}[No super-activation without squeezing]\label{CorrNoSuperact} Let $T_1$, $T_2$ be passive Gaussian quantum channels. If each channel either has a symmetric extension or satisfies the PPT
property, then $Q\left(T_1\otimes T_2\right)=0$.
\end{proposition}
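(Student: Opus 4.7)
The plan is to reduce the statement to the single class of channels with a symmetric extension, using Proposition~\ref{PropPPTimpliesEB} as the bridge between the PPT class and the entanglement-breaking class. Concretely, I would first note that each passive Gaussian channel is gauge covariant: this was established earlier from \eqref{eqYcommSigma} together with the analogous identity for $X$, which follow from the block structure \eqref{eqGroupIso} of passive symplectic transformations and the commutation of $\Gamma_E$ with $\sigma_{n_E}$ induced by the Gibbs form \eqref{eqGibbsEnv}.

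Next, I would dispose of the PPT case via Proposition~\ref{PropPPTimpliesEB}. If $T_i$ is passive (hence gauge covariant) and PPT, then by that proposition it is entanglement-breaking. Since every entanglement-breaking channel admits a symmetric extension (as recalled just after \eqref{eqEB}, using the measure-and-prepare structure from \cite{shor03eb, holwer05infdimsepeb}), we conclude that in both of the alternatives permitted by the hypothesis, $T_i$ has a symmetric extension.

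It then remains to invoke closure of the symmetric-extension class under parallel composition, which is the standard fact alluded to in the introduction: if $T_i$ has an extension $\tilde T_i$ with $\mathrm{tr}_{B_i}\circ \tilde T_i = \mathrm{tr}_{C_i}\circ \tilde T_i$ (where $B_i, C_i$ are the two isomorphic outputs), then $\tilde T_1\otimes\tilde T_2$ is a symmetric extension of $T_1\otimes T_2$ with respect to the swap $(B_1B_2)\leftrightarrow(C_1C_2)$. Together with the no-cloning argument \cite{WoZu82}, this yields $Q(T_1\otimes T_2)=0$.

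I do not anticipate any serious obstacle; the content is entirely carried by Proposition~\ref{PropPPTimpliesEB}. The only minor point to treat carefully is the closure-under-tensor-products step, to make sure the symmetric extension is meant in the sense relevant for the no-cloning argument (i.e.\ two identical marginals, not merely a degradable or anti-degradable structure), so that the tensor-product extension inherits the symmetry marginal-wise; this is a routine verification once the two single-channel extensions are fixed.
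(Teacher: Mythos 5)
Your proposal is correct and follows the same route as the paper: use gauge covariance of passive channels plus Proposition~\ref{PropPPTimpliesEB} to convert the PPT alternative into the symmetric-extension alternative, then close under tensor products and apply no-cloning. The paper's own proof is just a terser version of this; your extra care about the marginal-wise symmetry of $\tilde T_1\otimes\tilde T_2$ is a welcome elaboration of a step the paper leaves implicit.
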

\begin{proof}
Let $T_i\,(i=1\,{\rm or}\,2)$ be PPT. $T_i$ is gauge covariant, because it is passive, and according to Prop. \ref{PropPPTimpliesEB} it is thus entanglement-breaking. In particular, it has a symmetric extension,
which then also holds for $T_1\otimes T_2$.  Hence, the combined channel has zero quantum capacity.
\end{proof}
\section{Passive interactions with a squeezed environment}\label{SubSect2}
We now consider an example of a Gaussian channel $T$ for $n=n_E=2$. $T$ is generated by a passive interaction, as in \eqref{eqPassGCh}, but  the environment is assumed to be in a mixed squeezed state
$\rho_E$ (i.e. $\det\Gamma_E\neq 1$, $\Gamma_E\geq i\sigma_2$ and $\Gamma_E\ngeq\1_4)$.
$T$ will be shown to be PPT but not entanglement-breaking. We omit the index of $\1_2$ and $\sigma_2$ and choose
\bea
&\Gamma_E=\frac{3+\sqrt{13}}{2}
\left(
\begin{array}{cccc}
5 &&3 & \\
& 5&&-3\\
3 & &2&\\
& -3 &&2
\end{array}
\right),&\\
&S=\sqrt{\frac{1}{3}}
\left(
\begin{BMAT}(b){cc3}{cc3cc}
-\sqrt{2}\1 &\\
& -\1\\
\1 & \\
& \sqrt{2}\1  
\end{BMAT}
\begin{BMAT}(b){cc}{cc3cc}
\1 & \\
& \sqrt{2}\1 \\
\sqrt{2}\1 & \\
& \1
\end{BMAT}
\right).&
\eea
 $S$ represents two beamsplitters: one of transmittivity $\frac{2}{3}$ that couples the first system mode to the first mode of the environment and a second of transmittivity $\frac{1}{3}$ that acts between
the two remaining modes.
The corresponding $(X,Y)$, which characterize the Gaussian channel, then read
\begin{eqnarray}\label{XYexampPPTnotEB}
&X=\sqrt{\frac{1}{3}}
\left(
\begin{array}{cc}
\sqrt{2}\1 &\\
& \1\\
\end{array}
\right),&\\
&Y=\frac{3+\sqrt{13}}{6}
\left(
\begin{array}{cccc}
5 &&3\sqrt{2} & \\
& 5&& -3\sqrt{2} \\
3\sqrt{2} & &4 & \\
& -3\sqrt{2} && 4
\end{array}
\right).&
\end{eqnarray}
\begin{proposition}\label{PropPPTnotEB}
The Gaussian channel determined by \eqref{XYexampPPTnotEB} exhibits the PPT property but it is not entanglement-breaking. 
\end{proposition}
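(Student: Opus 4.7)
For PPT, I reorder the canonical coordinates as $(Q_1,Q_2,P_1,P_2)$, following the proof of Proposition~\ref{PropPPTimpliesEB}, and substitute directly into \eqref{eqPPT}. Since $X$ is diagonal, $\sigma_2+X\sigma_2X^T$ becomes block off-diagonal with $Q$--$P$ block $D:=\1+\hat X^2=\mathrm{diag}(5/3,4/3)$, so the $4\times4$ Hermitian matrix $Y-i(\sigma_2+X\sigma_2X^T)$ has top-left block $Y_{QQ}$ (which is positive definite) and bottom-right block $Y_{PP}$ that differs from $Y_{QQ}$ only by the sign of its off-diagonal entries (the imprint of the squeezed environment), with off-diagonal blocks $\pm iD$. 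A single Schur complement with respect to $Y_{QQ}$ reduces the positivity check to that of the $2\times2$ matrix $Y_{PP}-DY_{QQ}^{-1}D$. Short algebra rewrites this as a positive scalar multiple $(\sqrt{13}-3)/(6c)$ of a manifestly positive definite $2\times2$ matrix, where $c=(3+\sqrt{13})/6$. This confirms PPT and shows that the coefficient $(3+\sqrt{13})/2$ in $\Gamma_E$ is tuned precisely so that PPT holds with a strictly positive but numerically narrow margin.

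For non-EB I argue by contradiction, applying criterion \eqref{eqEB}. Assume real symmetric $M,N$ exist with $Y=M+N$, $M\ge i\sigma_2$ and $N\ge iX\sigma_2X^T$. Since $X$ is invertible, substitute $N=X\tilde N X^T$ so that the second constraint becomes $\tilde N\ge i\sigma_2$; both $M$ and $\tilde N$ are then bona fide two-mode covariance matrices. Decomposing $M$ and $\tilde N$ block-wise in the reordered basis and using $Y_{QP}=0$ to eliminate $M_{QP}=-\hat X\tilde N_{QP}\hat X$, the two block-PSD conditions $M-i\sigma_2\ge 0$ and $\tilde N-i\sigma_2\ge 0$, together with $Y_{QQ}=M_{QQ}+\hat X\tilde N_{QQ}\hat X$ and $Y_{PP}=M_{PP}+\hat X\tilde N_{PP}\hat X$, must be shown jointly infeasible. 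The strategy is to evaluate these PSD conditions on a well-chosen pair of complex test vectors that isolates the opposite-sign off-diagonal entries of $Y_{QQ}$ and $Y_{PP}$, and thereby extract a scalar inequality that no admissible pair $(M,\tilde N)$ can satisfy. Equivalently, this amounts to exhibiting an entanglement witness for the Gaussian Choi matrix of~$T$.

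The main obstacle is the non-EB step: PPT is a single positivity check, whereas non-EB is an infeasibility assertion for a semidefinite programme in the matrix unknowns $M_{QQ},M_{PP},M_{QP},\tilde N_{QQ},\tilde N_{PP},\tilde N_{QP}$. Because $c$ is tuned so that PPT holds by the narrow margin $(\sqrt{13}-3)/(6c)$, the obstruction to an EB decomposition must be correspondingly tight, so finding a short and transparent witness---rather than grinding through the full SDP feasibility analysis---is the delicate part of the argument.
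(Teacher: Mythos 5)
Your PPT verification is sound and in fact more informative than the paper's one-line ``verify explicitly'': the Schur complement $Y_{PP}-DY_{QQ}^{-1}D$ with $D=\1+\hat X^2=\mathrm{diag}(5/3,4/3)$ does reduce to $\frac{\sqrt{13}-3}{6c}\left(\begin{smallmatrix}5&-3\sqrt2\\-3\sqrt2&4\end{smallmatrix}\right)$, which has positive trace and determinant $2$, so that half of the claim is established. (One quibble: the coefficient $(3+\sqrt{13})/2$ is tuned so that $\Gamma_E\geq i\sigma_2$ is saturated, i.e.\ the environment state is extremal; the narrow PPT margin is a consequence, not the design criterion.)

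The non-EB half, however, has a genuine gap. You correctly identify that the statement is an \emph{infeasibility} assertion for the semidefinite constraints \eqref{eqEB} and that one needs a certificate rather than a direct argument, but you never produce one: ``evaluate on a well-chosen pair of complex test vectors'' and ``extract a scalar inequality that no admissible pair can satisfy'' is a description of what a proof would do, not a proof. Indeed, a single pair of test vectors applied to $M-i\sigma\geq 0$ and $\tilde N-i\sigma\geq 0$ generally cannot certify joint infeasibility of a two-block SDP; the certificate must be a dual-feasible object that couples both constraints. This is exactly what the paper supplies: it reformulates non-EB as the claim $\max_{(\lambda,M)\in\mathcal D}\lambda<1$ in \eqref{eqMaxLambda}, packages both constraints into the single condition \eqref{eqConstr}, and exhibits an explicit positive-definite $8\times 8$ witness $\Omega$ \eqref{eqOmega} with numerically specified entries satisfying $\tr{\Omega\tilde M}=0$ for all symmetric $M$ (because the relevant block difference is antisymmetric), $\tr{\Omega\tilde X}=1$, and $\tr{\Omega\tilde Y}<0.94$, whence $\lambda\leq\tr{\Omega\tilde Y}<1$. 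Until you actually exhibit such a witness (or an equivalent explicit contradiction), your argument establishes only the easy PPT direction; the substance of the proposition---that this PPT channel is \emph{not} entanglement-breaking---remains unproven. Note also that your reduction $N=X\tilde NX^T$ and the elimination $M_{QP}=-\hat X\tilde N_{QP}\hat X$ are fine as normalizations, but they do not by themselves bring you closer to the infeasibility certificate.
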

\begin{proof} Equations \eqref{eqCP} and \eqref{eqPPT} are satisfied as one verifies explicitly.\\
It remains to show that $T$ is not entanglement-breaking. With the inequalities \eqref{eqEB} in mind we observe that this is equivalent to
\begin{equation}
\label{eqMaxLambda}
 \max_{(\lambda,M)\in\mathcal{D}}
       \lambda < 1,\quad\mbox{where}\,
\mathcal{D} =
\left\{
\begin{array}{c}
(\lambda,M)\in\\
\R\times\cM_4(\R)
\end{array}
\middle|
\begin{array}{rl}
M&=M^T,\\
M&\geq \lambda i\sigma,\\
Y-M&\geq \lambda i X\sigma X^T
\end{array}
\right\}.
\end{equation}
This is a semi-definite program \cite{ConvexOpt}, so that the corresponding dual program can be used to construct a witness which certifies \eqref{eqMaxLambda}. Its specific form is given in the appendix.
\end{proof}
\section{Discussion}
Super-additivity of channel capacities is one of the surprising between
classical and quantum information theory and its mechanisms and quantitative
importance are still poorly understood.  Super-activation of the quantum
capacity is one of the most extreme examples of such effects.

In the practically relevant Gaussian setting, super-activation can be
achieved using squeezing, adding to a long list of quantum
effects -- entanglement generation, metrology, information coding or
continuous-variable key-distribution -- whose realization squeezing
enables. In the
Gaussian regime, we know that it is sometimes even a necessary 
resource. This is the case for entanglement generation \cite{WEP03} and, as 
we have proven in this work, as well for super-activation of the
quantum capacity. In the latter case, however, the proof of necessity
relies on the framework---the basic idea behind the construction of
all currently known instances of super-activation. In order to make a
stronger statement, we would need to know whether there are other
types of channels with zero quantum capacity \cite{SmSm12}.  

Another question, which suggests itself, is how much squeezing is
necessary within the given framework. Unfortunately, we do at the
moment not see an approach towards settling this quantitative
question.

\subparagraph{Acknowledgements:} 
DL and MMW acknowledge financial support from the  
CHIST-ERA/BMBF project CQC and the QCCC programme of the Elite Network of
Bavaria. GG acknowledges the project MALICIA and its financial support
  of the Future and Emerging Technologies (FET) programme within the Seventh
  Framework Programme for Research of the European Commission, under FET-Open grant
number: 265522.

\section{Appendix}
In the following we show how to certify \eqref{eqMaxLambda}.
Note that with the notation $
\tilde Y=
0_4\oplus
Y
$, $
\tilde X=
i\sigma
\oplus iX\sigma X^T
$, $
\tilde M=
M
\oplus
-M 
$
, the two inequalities in the definition of $\cal{D}$ can be rewritten as
\be\label{eqConstr}
\lambda \tilde X+\tilde Y+\tilde M \geq 0.
\ee
In the following we confirm \eqref{eqMaxLambda} by showing that for all $(\lambda,M)\in\mathcal{D}$, $\lambda \leq 0.94$. For this purpose, let us define the witness matrix $\Omega$,
\be\label{eqOmega}
\Omega=
(A+iB)\oplus(A+iC).
\ee
\bea
\nonumber
&A=
\left(
\begin{array}{cccc}
a_1 & &-a_3 & \\
& a_1&&  a_3\\
-a_3& &a_2 &  \\
&  a_3&& a_2
\end{array}
\right),
&a=\left(
\ba{c}
0.512\\
0.722\\
0.592
\ea
\right),
\\
\nonumber
&B=
\left(
\begin{array}{cccc}
 & b_1& & b_3\\
-b_1& &b_3& \\
& -b_3& & b_2 \\
-b_3& & -b_2& 
\end{array}
\right),
&b=\left(
\ba{c}
-0.212\\
0.552\\-0.368
\ea
\right),
\\
\nonumber
&C=
\left(
\begin{array}{cccc}
 & c_1& & c_3\\
-c_1& &c_3& \\
& -c_3&& c_2 \\
-c_3& &-c_2&  
\end{array}
\right),
&c=\left(
\ba{c}
0.39\\-0.3\\0.368
\ea
\right).
\eea
and state some of its properties:
\begin{enumerate}[(i)]
\item $\Omega$ is positive definite.
\item $\forall (\lambda,M)\in\mathcal{D}:\, \tr{\Omega\tilde M}= i\tr{(B-C)M}=0$, since $(B-C)$ is anti-symmetric and $M$ is symmetric.
\item $\tr{\Omega \tilde X}=2(b_1+b_2+\frac{2}{3}c_1+\frac{1}{3}c_2)=1$
\item $\tr{\Omega \tilde Y}=\left(1+\frac{\sqrt{13}}{3}\right)\left(5a_1+4a_2-6\sqrt{2}a_3\right)<0.94$
\end{enumerate}
Let now be $(\lambda,M)\in\mathcal{D}$. Applying (ii) and (iii) in the first line and (i) together with \eqref{eqConstr} in the third line leads to
\bea\nonumber
\lambda-\tr{\Omega\tilde Y}	&=&-\lambda\tr{\Omega\tilde X}-\tr{\Omega\tilde Y}-\tr{\Omega\tilde M}\\
					&=&-\tr{\Omega\left(\lambda\tilde X +\tilde Y +\tilde M\right)}\\ \nonumber
					&\leq& 0.
\eea
With (iv) we obtain $\lambda<0.94.$
\bibliographystyle{unsrt}

\end{document}